\DeclarePairedDelimiter\ceil{\lceil}{\rceil}
\DeclarePairedDelimiter\floor{\lfloor}{\rfloor}
\newtheorem{theorem}{Theorem}
\newtheorem{fact}{Fact}
\newcommand{\Break}{\State \textbf{break} }
\author{U\u{g}ur K\"{u}\c{c}\"{u}k\affiliationmark{1}
  \and A. C. Cem Say\affiliationmark{1}
  \and Abuzer Yakary{\i}lmaz\affiliationmark{2,3}\thanks{Yakary{\i}lmaz was
partially supported by CAPES with grant 88881.030338/2013-01 and ERC
Advanced Grant MQC.}}
\title{Inkdots as advice for finite automata}
\affiliation{
  Bo\u{g}azi\c{c}i University, Istanbul, Turkey\\
  National Laboratory for Scientific Computing, Petr\'{o}polis, RJ, Brazil\\
  University of Latvia, Faculty of Computing, R\={\i}ga, Latvia}
\keywords{advised computation, finite automata, small space, inkdots}
\begin{document}

\publicationdetails{19}{2017}{3}{1}{1286}

\maketitle

\begin{abstract}
We examine inkdots placed on the input string as a way of providing advice to  finite automata, and establish the relations between this model and the previously studied models of advised finite automata. The existence of an infinite hierarchy of classes of languages that can be recognized with the help of increasing numbers of inkdots as advice is shown. The effects of different forms of advice on the succinctness of the advised machines are examined. We also study randomly placed inkdots as advice to probabilistic finite automata, and demonstrate the superiority of this model over its deterministic version. Even very slowly growing amounts of space can become a resource of meaningful use if the underlying advised model is extended with access to secondary memory, while it is famously known that such small amounts of space are not useful for unadvised one-way Turing machines.
\end{abstract}

\section{Introduction}
\label{sec:in}

Advice is a piece of trusted supplemental information that is provided to a computing machine in advance of its execution in order to assist it in its task (see (\cite{KL82})). Typically, the advice given to a machine depends only on the length, and not the full content of its input string. Advised computation has been studied on different models and in relation with various concepts. A variety of methods have also been proposed for providing such supplemental information to finite automata, and their power and limitations have been studied extensively (see \textit{e.g.} (\cite{DH95}), (\cite{TYL10}) and (\cite{KSY14})). 

In (\cite{DH95}),   Damm and Holzer examined the first advised finite automaton model, in which  the advice is prepended to the original input prior to the computation of the automaton. In this \emph{advice prefix} model, the machine starts by scanning the advice, which  has the effect of selecting a  state at which to start the  processing of the actual input string. Letting $\mathsf{REG/}k$ denote the class of languages that can be recognized by deterministic finite automata with binary prefix advice of constant length $k$, it is known that $\mathsf{REG/}k\subsetneq\mathsf{REG/}(k+1)$, for all $k\geq 0$. However, allowing the advice length to be an increasing function of the input length  does not bring in any more power, since a finite automaton with $s$ states which takes a long advice prefix can always be imitated by another automaton which takes an advice prefix of length $\ceil{\log s}$.

In the \emph{advice track} model, the advice is placed on a separate track of the read-only input tape, so that it can be scanned in parallel with the actual input by the single tape head. This model was first examined by Tadaki et al. in (\cite{TYL10}), and later by T. Yamakami in a series of papers (\cite{Ya08,Ya10,Ya11,Ya14}). Unlike the advice prefix model,  it is possible to utilize advice strings whose lengths grow linearly in terms of input length meaningfully in this setup. As a reflection, the class of languages recognizable by deterministic finite automata with an advice track is denoted by $\mathsf{REG/}n$ and is different from $\mathsf{REG/}k$. The language $\{a^mb^m \mid m>0\}$, for instance, can be shown easily to be in $\mathsf{REG/}n$, whereas it is not contained in  $\mathsf{REG/}k$ for any $k\geq 0$.   Probabilistic and quantum variants of this model were also studied, and many relations have been established among the classes of languages that can be recognized in each setting. In the sequel, we will sometimes need to specify the size of the alphabet supporting the strings on the advice track. In such cases, the number of elements in the advice alphabet will also appear in the class name, \textit{e.g.} $\mathsf{REG/}n(2)$ is the subset of $\mathsf{REG/}n$ where advice strings are restricted to be on a binary alphabet.

Another model of advised finite automata, which incorporates one or more separate tapes for the advice, was introduced by Freivalds in (\cite{Fr10,AF10}).  In this model, the automaton has two-way access to both the input and the advice tapes. Unlike the advice prefix and advice track models, Freivalds' model requires the advice string for inputs of length $n$ to be helpful for all shorter inputs as well.

In (\cite{KSY14}),  K\"{u}\c{c}\"{u}k et al. studied yet another model of finite automata with advice tapes, where the advice string is provided on a separate read-only tape, the access to which is limited to be one-way. The content of the advice depends only on the length of the input.  Deterministic, probabilistic, and quantum variants of the model under several settings for input access were examined, and separation results were obtained for the classes of languages that can be recognized under certain resource bounds in many of these cases.

In this paper, we introduce a new way of providing advice to finite automata, namely, by marking some positions on the input with inkdots. 
This model is presented in detail in Section \ref{sec:model}, where  it is shown to be intermediate in power between the prefix and track models of advice. Section \ref{sec:moredots} demonstrates the existence of an infinite hierarchy among the classes of languages that can be recognized with the help of different numbers of inkdots as advice; both when those numbers are restricted to be constants, and when bounded by  increasing functions of the input length. In Section \ref{sec:succinctness}, we show that inkdot advice can cause significant savings in the number of states of the advised automaton when compared with the prefix advice model, which in turn is superior in this sense to pure unadvised automata. 

In Section \ref{sec:random}, we demonstrate that the strength of the model increases if one employs a probabilistic automaton instead of a deterministic one, and assists it with inkdots placed randomly according to an advice distribution. Section \ref{sec:sublogspace} extends the advised machine model by allowing it access to a work tape. It is interesting to note that arbitrarily small space turns out to be a fruitful computational resource along with advice, while it is well known that one-way sublogarithmic-space Turing machines (TM's) cannot recognize more languages than their constant-space counterparts.  Section \ref{sec:conc}  concludes the paper with a list of open questions.

\section{Inkdots as advice to finite automata}
\label{sec:model}

In this section, we  introduce a new model of advised finite automata in which the advice will be provided as inkdots on the input string. An inkdot is a supplementary marker that can be placed on any symbol of the input  of a computing machine. The presence of that mark can be sensed  by the automaton only when the input tape head visits that position. This mark can not be erased, and no more than one inkdot is allowed on one cell.  (Inkdots are different from pebbles, which are more widely used in the literature (see \textit{e.g.} (\cite{Sz94})), only in their inability to be moved around on the input tape.) It is known (see (\cite{RCH91})) that a deterministic Turing machine would not gain any additional computational power if it is also provided with the ability of marking one input tape cell  with an inkdot. 

A finite automaton that takes inkdots as advice does not have the power to mark cells on the input tape with inkdots, however, it can sense these marks if they are present on the currently scanned input cell.  The inkdots are assumed to be placed prior to the execution of the machine in accordance with an advice function which maps the length of the input string to a set of positions on the input string where the inkdots are to be placed. A deterministic finite automaton (dfa) with inkdot advice can then be defined in precisely the same way as a standard unadvised dfa (see \textit{e.g.} (\cite{Si06})), but with an extended input alphabet containing the       ``dotted," as well as original, versions of the symbols of the actual input alphabet.     

  The class of languages that can be recognized with the help of advice consisting of a certain amount $a$ of inkdots by a dfa will be denoted $\mathsf{REG}/a(\odot)$. In this expression, $a$ may either be a natural number, describing cases where the advised machine is supposed to use at most that constant number of inkdots regardless of the length of the input, or a function $f(n)$, indicating that at most $f(n)$ inkdots can appear in the advice for inputs of length $n$. Note that $f(n)\leq n$ in any case.

Later in the paper, we will consider machines that  can pause the  input head in some computational steps, rather than move it to the right in each step as required for the finite automata studied in most of the manuscript.
Notation for representing language classes associated with this access mode, as well as for non-constant memory and randomized advice,  will be introduced as we examine those issues in subsequent sections.


\subsection{Inkdots vs. prefix strings as advice}
We start by establishing that the inkdot advice model is stronger than the prefix model, even when it is restricted to a single inkdot per advice.
\begin{theorem}
\label{theorem:inkdot_prefix}
For every  $k \in \mathbb{N}$, $ \mathsf{REG}/k  \subsetneq \mathsf{REG}/1(\odot)$. 
\end{theorem}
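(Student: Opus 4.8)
The plan is to prove the two parts of the strict inclusion separately: first the containment $\mathsf{REG}/k \subseteq \mathsf{REG}/1(\odot)$, and then the existence of a witness language in $\mathsf{REG}/1(\odot)$ that lies outside $\mathsf{REG}/k$ for every $k$. Since the second part exhibits a single language outside $\mathsf{REG}/k$ for all $k$ simultaneously, it settles strictness uniformly.

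For the containment, I would fix a dfa $M$ with state set $Q$, $|Q| = s$, that recognizes $L \in \mathsf{REG}/k$ with binary prefix advice. Reading a length-$k$ advice string drives $M$ from its initial state to one of at most $m \le 2^k$ effective start states $q^{(0)},\dots,q^{(m-1)}$, so the advice for length $n$ merely selects an index $j_n \in \{0,\dots,m-1\}$, and $x \in L$ iff $M$ accepts $x$ when started in $q^{(j_n)}$. I would encode this index into the single inkdot: for $n \ge m$, place the inkdot on one of the first $m$ cells so that its position reveals $j_n$. The one-way advised machine $M'$ cannot wait to learn $j_n$ before starting its simulation, since it would already have consumed part of the input; the fix is to run all $m$ candidate computations of $M$ in parallel, maintaining an $m$-tuple of states of $M$ (finitely many, at most $|Q|^m$). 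Once $M'$ has scanned past the first $m$ cells it knows $j_n$, commits to the $j_n$-th component, and accepts iff that component is accepting at the end of the input. The finitely many inputs of length $<m$ are handled by hard-wiring their membership into the states of $M'$.

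For strictness I would use $L_c = \{ w \in \{a,b\}^* : |w| \text{ is odd and the middle symbol of } w \text{ is } a \}$. This is in $\mathsf{REG}/1(\odot)$: for odd $n$ the advice places the inkdot on the middle cell, and the machine accepts iff the dotted symbol is $a$; for even $n$ the advice supplies no inkdot, so the machine, never seeing a dotted $a$, rejects. To show $L_c \notin \mathsf{REG}/k$, assume otherwise. As above, constant prefix advice selects among at most $2^k$ start states, so by pigeonhole some fixed dfa $M_q$ with $s$ states, started in a fixed state $q$, must for infinitely many odd lengths $n$ accept a length-$n$ string exactly when its position-$(n+1)/2$ symbol is $a$. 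Writing $\beta$ for the action of the letter $b$ on the states of $M_q$ and setting $P = s!$, the orbit $\beta^{j}(p)$ is $P$-periodic once $j \ge s$, for every state $p$. Choosing such an $n = 2m-1$ with $m \ge s + P + 1$ and comparing $w = b^{m-1} a\, b^{m-1}$ (middle symbol $a$, so $w \in L_c$) with $w' = b^{m-1+P} a\, b^{m-1-P}$ (same length, $a$ off-centre, so $w' \notin L_c$), periodicity forces $M_q$ into the same state before the $a$, hence the same state after it, and finally the same state at the end of input. Thus $M_q$ cannot distinguish $w$ from $w'$, contradicting $w \in L_c$ and $w' \notin L_c$.

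The main obstacle is the lower bound: one must exclude \emph{every} fixed finite automaton started in \emph{any} of finitely many start states as a recognizer of $L_c$ across infinitely many lengths. The periodicity argument with the common period $P = s!$ is what makes the contradiction uniform over all lengths; once it is in place, the pigeonhole reduction to a single start state and the state-collapse on $w,w'$ are routine. The containment direction is conceptually easier, its only delicate point being the one-way machine's inability to backtrack after it learns the index $j_n$, which the parallel-simulation construction circumvents.
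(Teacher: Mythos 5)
Your proof is correct, and while your containment argument is essentially the paper's (parallel simulation of the at most $2^k$ post-advice start states, with the inkdot's position encoding which simulated copy to trust, and short inputs hard-wired), your strictness argument takes a genuinely different route. The paper witnesses the separation with $\{a^mb^m \mid m \in \mathbb{N}\}$ and simply cites Propositions 1 and 7 of the Damm--Holzer paper for the fact that this language lies outside $\mathsf{REG}/k$; you instead use the ``middle symbol is $a$'' language and prove the lower bound from scratch, via pigeonhole (some fixed post-advice start state $q$ must serve infinitely many odd lengths) followed by a periodicity argument with common period $P = s!$ that collapses $b^{m-1}a\,b^{m-1}$ and $b^{m-1+P}a\,b^{m-1-P}$ onto the same final state. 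Your choices all check out: $P = s!$ is indeed a period of every $\beta$-orbit after at most $s$ steps, and $m \ge s+P+1$ guarantees both the left blocks ($\ge s$) and the shorter right block ($m-1-P \ge s$) are in the periodic regime, so the two strings of equal length, which receive identical advice, are indistinguishable to $M_q$ — a contradiction. What your approach buys is self-containedness: the separation needs no external result, and the same pumping skeleton reappears in the paper's later hierarchy proofs (Theorems \ref{theorem:inkdot_hierarchy} and \ref{theorem:inkdot_function_hierarchy}), so your argument is stylistically consonant with the rest of the paper. What the paper's route buys is brevity and a witness language ($\{a^mb^m\}$) that is reused throughout the paper, keeping the exposition uniform.
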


\begin{proof}
We first show that  $ \mathsf{REG}/k  \subseteq \mathsf{REG}/1(\odot)$ for all  $k \in \mathbb{N}$. 
Let $L$ be a language that is recognized by a dfa $M$ using $k$ bits of binary prefix advice. Without loss of generality, assume that $L$ is defined on a binary alphabet. One can construct a finite automaton $N$ that recognizes $L$ with the help of a single inkdot as advice as follows.

$N$ will use a lookup table to treat inputs shorter than $2^k$ bits on its own, without utilizing advice. For longer inputs, view each advice string given to $M$ as a natural number $b$ written with $k$ bits. This information will be conveyed  by placing the inkdot on the $(b+1)$'st symbol of the input to $N$.

Note that reading the prefix advice may bring $M$  into one of at most of $2^k$ different states when it is positioned at the beginning of the actual input. $N$ simulates at most $2^k$  different instances of $M$ starting from each of those different initial states in parallel on the actual input. When it scans the inkdot, $N$ uses the information encoded in the inkdot's position to pick one of the simulated machines, run it to its completion, and report its outcome as the result of its computation.

Having proven the subset relation, it remains to exhibit a language  recognized by a finite automaton that takes an inkdot as advice but not by any dfa that takes prefix advice. 
Consider the language $\{a^mb^m \mid m \in \mathbb{N}\}$, which can not be recognized by any finite automaton with advice prefix,  by Propositions 1 and 7 of  (\cite{DH95}). An inkdot marking the $(n/2)+1$'st symbol for inputs of even length $n$ is sufficient to help a dfa recognize this language, since the machine need only check that the string is of the form $a^*b^*$, and that the first $b$ appears precisely at the marked position.
\end{proof}


\subsection{Inkdots vs. the advice track}

It is evident that inkdot advice is a special case of the advice track model, when the advice alphabet is restricted to be binary. Recall that $\mathsf{REG}/n(t)$ denotes the class of languages recognized by dfa's with the aid of  advice written on a track using a $t\mbox{-ary}$ alphabet.

\begin{theorem}\label{theorem:inkdot_track}
$ \mathsf{REG}/n  (\odot)  = \mathsf{REG}/n(2)$. 
\end{theorem}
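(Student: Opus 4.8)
The plan is to prove the equality by establishing the two inclusions $\mathsf{REG}/n(\odot) \subseteq \mathsf{REG}/n(2)$ and $\mathsf{REG}/n(2) \subseteq \mathsf{REG}/n(\odot)$ separately, in each case by a direct, advice-preserving simulation. The whole argument rests on a single observation: for inputs of length $n$, an inkdot advice function designates a subset of the positions $\{1,\dots,n\}$, and such subsets are in bijection with the binary strings in $\{0,1\}^n$ via characteristic vectors. So the two resources carry exactly the same information, cell by cell.

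For the first inclusion, which the remark preceding the statement already calls evident, I would start from a dfa $M$ recognizing $L$ with an inkdot advice function $\iota$, where $\iota(n)\subseteq\{1,\dots,n\}$ lists the dotted positions. I would build a dfa $M'$ with a binary advice track whose advice string $u^{(n)}\in\{0,1\}^n$ is the characteristic vector of $\iota(n)$, i.e. $u^{(n)}_i=1$ exactly when position $i$ is dotted. Since $M'$ scans the input symbol and the track symbol of a cell simultaneously, it can reconstruct on the fly the dotted-or-undotted extended symbol that $M$ would read at that position, and thereby mimic $M$ step for step; correctness is then immediate.

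For the reverse inclusion I would run the same correspondence backwards: given a dfa $M'$ recognizing $L$ with binary track advice $u^{(n)}\in\{0,1\}^n$, I would define the inkdot advice $\iota(n)=\{\,i : u^{(n)}_i=1\,\}$ and have the inkdot-advised machine interpret a dotted symbol as ``track bit $1$'' and an undotted symbol as ``track bit $0$.'' The one point that needs a sentence of justification is that this is a legal advice function for the class $\mathsf{REG}/n(\odot)$: a length-$n$ binary string may contain up to $n$ ones, so the realizing advice may place as many as $n$ inkdots, which is precisely the budget permitted when the inkdot bound is the function $f(n)=n$; moreover no cell ever carries more than one dot, matching the at-most-one-inkdot-per-cell restriction.

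The main — and essentially the only — obstacle is conceptual rather than technical: one must see that a binary advice track is \emph{literally} a choice of marked positions, so the two models are interchangeable without any quantitative loss. Once that identification is made, both simulations amount to a syntactic relabelling of the transition function over the extended alphabet $\Sigma\times\{0,1\}$. I would take care only with the boundary conventions of each model (how inputs too short to receive meaningful advice are handled, and whether the track covers any end-markers) so that the advice functions constructed above are well defined for every input length.
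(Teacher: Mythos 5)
Your proposal is correct and follows essentially the same route as the paper: both rest on the bijection between inkdot patterns and binary characteristic vectors, with the simulations reduced to a relabelling over the extended alphabet since the track is read in lockstep with the input. You simply spell out in more detail what the paper dismisses as trivial, including the harmless check that up to $n$ inkdots are allowed when the bound is $f(n)=n$.
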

\begin{proof}
Any inkdot pattern on the $n$-symbol-long input string corresponds to a unique $n$-bit advice string, with (say) $1$'s for the marked positions, and 0's for the rest. The way the advice track is accessed simultaneously with the input track makes the two simulations required for proving the equality trivial.

\end{proof}

The reader is thus referred to (\cite{TYL10}) and (\cite{Ya10}) for what is already known about the capabilities and limitations of  advice read from tracks with binary alphabets. In particular, Theorem 2 of (\cite{Ya10}), which is cited as Fact \ref{yamakami_characterization} below, provides a straightforward characterization of  $\mathsf{REG}/n$, and can thus be used to show that certain languages are not  in  $\mathsf{REG}/n$, and therefore neither in $ \mathsf{REG}/n  (\odot)$.

\begin{fact}(\cite{Ya10})\label{yamakami_characterization}
For any language S over an alphabet $\Sigma$, the following two statements are equivalent. Let $\Delta = \{ (x,n) \in \Sigma^* \times  \mathbb{N} \mid |x| \leq n  \}$.

\begin{enumerate}
\item  $S$ is in $\mathsf{REG}/n$.
\item There is an equivalence relation $\equiv_S$ over $\Delta$  such that
\begin{enumerate}
\item the total number of equivalence classes in $\Delta / \equiv_S$  is finite, and
\item for any length $n\in \mathbb{N} $ and any two strings $x, y \in \Sigma^* $ with $|x| = |y| \leq n$, the
following holds: $(x,n) \equiv_S (y, n)$ iff,  for all $z$ with $|xz| = n$, $xz \in S  \Leftrightarrow  yz \in S$.
\end{enumerate}

\end{enumerate}

\end{fact}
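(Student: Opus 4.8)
The plan is to prove this as a two-track analogue of the classical Myhill--Nerode theorem, establishing the two implications separately. In both directions the central object is the family of ``level-$n$ continuation relations'' that condition (b) pins down: for each length $n$ and each $\ell \le n$, strings $x,y \in \Sigma^\ell$ are related exactly when $xz \in S \Leftrightarrow yz \in S$ for every $z$ of length $n-\ell$. Condition (b) forces $\equiv_S$ to coincide, as a biconditional, with this relation on same-length, same-$n$ pairs, while leaving us free to merge pairs living at different lengths or at different values of $n$; condition (a) then demands that, after this merging, only finitely many classes survive.

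For the direction $(1)\Rightarrow(2)$, I would start from a dfa $M$ with state set $Q$ and advice function $h$ witnessing $S \in \mathsf{REG}/n$, and let $\Phi(x,n)\in Q$ denote the state $M$ reaches after scanning $x$ together with the length-$|x|$ prefix of $h(n)$. The key observation is that the behavior of $x$ on all length-$(n-|x|)$ continuations is completely determined by $\Phi(x,n)$ and the advice suffix $h(n)_{|x|+1}\cdots h(n)_n$, so at each fixed pair $(n,\ell)$ the continuation relation has at most $|Q|$ classes. I would then assign to each such class a label in $\{1,\dots,|Q|\}$ (ranking them in any fixed way) and define $\equiv_S$ to be the kernel of the resulting labeling map $\gamma\colon\Delta\to\{1,\dots,|Q|\}$. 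Taking a kernel automatically yields an equivalence relation with at most $|Q|$ classes, giving (a); and because within a single pair $(n,\ell)$ the labeling is injective on continuation classes, the biconditional of (b) holds in both directions. The subtlety the labeling trick resolves is that (b) is a genuine ``iff'': using ``$M$ is in the same state'' as the relation would be too fine and could fail the $\Leftarrow$ half, whereas labeling the continuation classes themselves is exactly as coarse as (b) requires.

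For $(2)\Rightarrow(1)$, I would build a dfa together with an advice track out of the finitely many classes of $\equiv_S$. A short argument from (b) first shows that within any fixed level the relation is right-invariant, so the class of $x\sigma$ at level $n$ depends only on the class of $x$ at level $n$ and on $\sigma$; the machine's states will be these classes, and it will track the class of the prefix read so far. The essential difficulty is that $\equiv_S$ need not be congruent \emph{across} different values of $n$, so there is no single global transition function, and this is precisely where the advice must be spent: for input length $n$ I would let the $j$-th advice symbol encode the local transition map $(\text{class},\sigma)\mapsto\text{class}$ valid at position $j$, folding the acceptance information (legitimate by (b) applied with $z=\varepsilon$) into the final symbol. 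Since $\equiv_S$ has only finitely many classes, there are only finitely many such local maps, so the advice alphabet stays finite while the advice has length $n$, placing $S$ in $\mathsf{REG}/n$. I expect this backward direction to be the main obstacle: one must check that the position-indexed transition tables can always be written down consistently, and verify carefully that it is the finiteness of the class set that keeps the advice alphabet bounded.
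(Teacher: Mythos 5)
This statement is Fact~\ref{yamakami_characterization} in the paper: it is imported verbatim from Theorem~2 of \cite{Ya10} and the paper gives no proof of it, so there is no in-paper argument to compare yours against; I can only judge your proposal on its own merits, against the standard argument in Yamakami's work.

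On those terms your proposal is correct, and it identifies the two genuine subtleties. In the direction $(1)\Rightarrow(2)$ you are right that the naive relation ``$M$ reaches the same state'' is too fine to satisfy the \emph{iff} in condition (b); your fix --- observing that within each fixed level $(n,\ell)$ the continuation relation is coarsened from the state relation and hence has at most $|Q|$ classes, then taking the kernel of an injective-per-level labeling $\gamma\colon\Delta\to\{1,\dots,|Q|\}$ --- makes (b) hold as a biconditional while keeping the total number of classes bounded by $|Q|$, since (b) constrains nothing across different levels. In the direction $(2)\Rightarrow(1)$ you correctly isolate the fact that (b) yields right-invariance only \emph{within} a fixed $n$ (the forward half of the iff pushes the continuation condition past an appended symbol $\sigma$, and the backward half converts it back into $\equiv_S$), so the transition function is position- and length-dependent; spending the advice track to write, at position $j$, the local table $(\text{class},\sigma)\mapsto\text{class}$, with the accepting classes (well-defined by (b) at $z=\varepsilon$) folded into the last symbol, is exactly the right use of the advice, and the finiteness of the class set bounds the advice alphabet by roughly $k^{k|\Sigma|}\cdot 2^k$ for $k$ classes. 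The only details left implicit --- how the machine learns the class of $(\varepsilon,n)$ (encode it in the first advice symbol) and the case $n=0$ --- are routine. This is essentially the same Myhill--Nerode-style argument by which the characterization is established in \cite{Ya10}.
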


In order to show the difference between the inkdot model and the general advice track approach, we consider bigger track alphabets.

\begin{theorem}
\label{theorem:inkdot_track_with_larger_alphabet}
  $\mathsf{REG}/n(2) \subsetneq \mathsf{REG}/n(k)$ for all $k\in Z$ with $k > 2$.
\end{theorem}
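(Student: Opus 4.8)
The plan is to prove the inclusion $\mathsf{REG}/n(2) \subseteq \mathsf{REG}/n(k)$ and then to separate the two classes by exhibiting a witness language in $\mathsf{REG}/n(k)\setminus\mathsf{REG}/n(2)$. The inclusion is immediate: a binary advice track is merely a $k$-ary advice track that happens never to use the symbols $3,\ldots,k$, so any dfa witnessing membership in $\mathsf{REG}/n(2)$ also witnesses membership in $\mathsf{REG}/n(k)$ verbatim. All the work is in the strict part, and the governing intuition is information-theoretic: an advice string of length $n$ over a $t$-symbol alphabet ranges over only $t^n$ possibilities, so for a fixed machine it can select among at most $t^n$ distinct ``slices'' $L\cap\Sigma^n$. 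Since $k^n$ outgrows $2^n$, a $k$-ary track can realize length-$n$ behaviours that no binary track can. Note that Fact~\ref{yamakami_characterization} will not help here, as it characterizes the alphabet-agnostic class $\mathsf{REG}/n$ and is therefore blind to the distinction we are after; a counting argument is needed instead.

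For the witness, I would take the input alphabet to be $\Sigma=\{1,2,\ldots,k\}$ and build a language $L$ containing exactly one string $c_n\in\Sigma^n$ of each length $n$, so that every length-$n$ slice $L\cap\Sigma^n=\{c_n\}$ is a singleton. Such an $L$ lies in $\mathsf{REG}/n(k)$ for any choice of the $c_n$: simply write $c_n$ itself on the $k$-ary advice track as the advice for length $n$, and let a two-state dfa read the input and advice symbols in parallel, checking position by position that they agree and rejecting as soon as a mismatch is seen (the advice length equals the input length by definition of the track model, so no length check is needed). The freedom to choose the $c_n$ arbitrarily is exactly what will let us defeat every binary-advice machine.

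To show $L\notin\mathsf{REG}/n(2)$ I would enumerate the countably many dfa's $M_1,M_2,\ldots$ that read an input over $\Sigma$ together with a binary advice track, and choose the $c_n$ by diagonalization. The key counting step is that, for a \emph{fixed} machine $M_i$ and a fixed length $n$, as the binary advice ranges over its $2^n$ possible values the machine realizes at most $2^n$ distinct slices; call this set $\mathcal{S}_i^{(n)}$. Since the number of singletons $\{c\}$ with $c\in\Sigma^n$ equals $k^n$, and $k^n>2^n\geq|\mathcal{S}_i^{(n)}|$ whenever $n\geq 1$ (as $k\geq 3$), there is always a string $c$ whose singleton slice is unattainable by $M_i$. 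Assigning each machine $M_i$ its own length $n_i$ (an injective assignment, say $n_i=i$) and setting $c_{n_i}$ to be such an unattainable string forces $M_i$ to produce the wrong slice at length $n_i$, regardless of which advice function it is paired with. Hence no binary-advice machine recognizes $L$.

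The main obstacle, and the step deserving the most care, is this impossibility direction. Two points must be handled cleanly. First, the $2^n$ bound on $|\mathcal{S}_i^{(n)}|$ must be justified from the fact that the slice is a function of the machine and the advice alone; this holds uniformly whether the head is one-way or two-way, since it counts advice strings rather than head motions. Second, and more delicately, the advice function is not required to be computable, so one cannot diagonalize against advice functions directly; the counting argument sidesteps this by bounding the set of \emph{all} achievable slices over \emph{every} advice, so that fixing a single bad slice at length $n_i$ rules out machine $M_i$ for all advice functions at once. Verifying that the diagonal choices at distinct lengths do not interfere, which is guaranteed by the injectivity of $i\mapsto n_i$, then completes the construction.
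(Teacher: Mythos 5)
Your proof is correct, but it reaches the separation by a genuinely different route than the paper. The paper also builds a witness language with exactly one member per length over the $k$-ary alphabet (and certifies membership in $\mathsf{REG}/n(k)$ by the same ``copy the unique member onto the advice track'' trick), but its members are obtained by slicing a fixed Martin-L\"{o}f random binary sequence $w$ into blocks of length $\floor{\log_2 k^i}$ and re-encoding each block in base $k$; non-membership in $\mathsf{REG}/n(2)$ then follows from incompressibility: a recognizing dfa's binary advice strings $a_1,\ldots,a_n$ plus a constant-size decoding program would describe the prefix $s_1\cdots s_n$ of $w$ using roughly $\sum_{u=1}^n u$ bits, while that prefix is roughly $\log_2 k$ times longer, contradicting $K(w_{1:n}) \geq n - c$. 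You instead enumerate the countably many candidate advised dfa's and diagonalize, using the per-length counting bound that a fixed machine with binary advice realizes at most $2^n$ slices at length $n$ while there are $k^n > 2^n$ candidate singletons. The two arguments share the same information-theoretic core ($k^n$ versus $2^n$; incompressibility is itself a counting argument), but yours is more elementary --- it needs no Kolmogorov complexity or algorithmic randomness --- and is arguably tighter, since the paper's final step compares description lengths only ``approximately'' and asymptotically. What the paper's approach buys in exchange is a machine-independent witness: any ML-random sequence yields a single concrete language $\mathtt{L}_w$ that defeats all binary-advice machines at once, with no enumeration of machines needed, and the $\log_2 k$ factor quantifies exactly how much more information the larger track alphabet carries. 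Your two delicate points (that the slice is a function of machine and advice alone, and that counting over all advice strings sidesteps the non-computability of advice functions) are handled correctly and are precisely where a naive diagonalization against advice functions would have failed.
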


\begin{proof}
Since it  is trivially true that $\mathsf{REG}/n(2) \subseteq \mathsf{REG}/n(k)$ for all $k\in Z$ with $k > 2$, we just need to show the existence of languages that can be recognized by finite automata with the help of  $k\mbox{-}$ary advice but not with the help of binary advice supplied on the advice track.

Let $w = w_1w_2\cdots$ be an infinite (Martin-L\"{o}f) random  binary sequence. Let  $w_{i}$ denote $i$'th symbol of $w$ and let $w_{i:j}$ denote the segment of $w$ which starts with $w_i$ and ends with $w_j$, where $i \leq j$.  As $w$ is random, none of its prefixes can be compressed by more than an additive constant, \textit{i.e.} there is a constant $c$ such that $K(w_{1:n}) \geq n -  c $ for all $n$ where $K(w_{1:n})$ stands for the Kolmogorov complexity (or equivalently, size of the minimum description) of $w_{1:n}$.  

With reference to  $w$, we will define a $k\mbox{-ary}$ (for arbitrary $k>2$) language $\mathtt{L}_w$,   which has exactly one member, $l_i$, of each length $i \in Z^+$. For $i \in Z^+$, we divide $w$ into consecutive subwords $s_i$ the lengths, $|s_i|$ of which will be specified further below. We obtain each member $l_i$ of  $\mathtt{L}_w$  from the corresponding subword $s_i$ of $w$ by first reading $s_i$ as a binary number, then converting it to a $k\mbox{-ary}$ number, and then padding with zeros to the left if necessary to make the result $i$ symbols long. This relation can be expressed by a function $f$ as $l_i =f(s_i , k, i)$. We are yet to specify the lengths of each subword $s_i$ in $w$. We need to be able to encode the content of $s_i$  into $i$ $k\mbox{-ary}$ symbols. This entails  $|s_i| \leq \floor{ log_2( {k^i} )}$. So we set $|s_i| = \floor{ log_2( {k^i} )}$ for maximum compression. Then we can formally define  $\mathtt{L}_w$ as

\begin{displaymath}
\mathtt{L}_w = \{ f(w_{a:b} , k,i) \mid   a = \sum_{u=1}^{i-1}( {\floor{ log_2( {k^u} )}}) + 1 , \  b = \sum_{u=1}^{i}( {\floor{ log_2( {k^u} )}}) \ \mbox{for}\ i \in Z^+ \}.
\end{displaymath}

Since  $\mathtt{L}_w$ has exactly one member of each length, it is obvious that providing this member as advice and letting the automaton  check the equality of the advice and the input would suffice to recognize  $\mathtt{L}_w $ when a $k\mbox{-ary}$ alphabet is used on the advice track. Therefore, we have  $\mathtt{L}_w \in \mathsf{REG}/n(k)$. 

Now let us assume  $\mathtt{L}_w  \in \mathsf{REG}/n(2)$, which would mean that there is a dfa $M$ which recognizes  $\mathtt{L}_w $ with the help of binary advice on the advice track. Let  $a_1, a_2,\ldots, a_n$ be the binary advice strings to $M$ for inputs of length $1,2,\ldots,n$ respectively. Then Algorithm \ref{Program_P} working on the input $a_1, a_2,\ldots, a_n$ would compute and print the concatenation of the  first $n$ subwords $s_1,\ldots,s_n$ of $w$.

\begin{algorithm}
\caption{A short program for printing a prefix of $w$}
\label{Program_P}
\begin{algorithmic}[1]
 \State on input  $a_1, a_2,\ldots, a_n$ 
 \State result = blank
\For{$ i$ = 1 to $n$ }
	\For{every $k\mbox{-ary}$ string $s$ of length $i$ }
	\State Simulate $M$ on input $s$ with advice $a_i$
	\If {$M$ accepts $s$}
		\State $s_i$ = \textsc{TranslateToBinary}($s$, $\floor{ log_2( {k^i} )} $)
		\State result = concat(result, $s_i$)
		\Break 
	\EndIf
	\EndFor
\EndFor
\State print result
\State end
\Procedure{TranslateToBinary}{word, length}
	\State binary-number = transform word from   $k\mbox{-}$ary to binary
	\State binary-word = pad binary-number with zeros to the left until it fits length
	\State return binary-word 
\EndProcedure
\end{algorithmic}
\end{algorithm}

So Algorithm \ref{Program_P} (which is, say, $c$ bits long) and the first $n$ advice strings for $M$ form a description of the prefix of $w$ formed by concatenating the  first $n$ subwords $s_1,\ldots,s_n$. But this violates the incompressibility of $w$, since for large values of $n$, the total length of $s_1,\ldots,s_n$ is given by $\sum_{u=1}^{n}( {\floor{\log_2k^u}})$, which will  be approximately $\log_2{k}$ times the length of its description,  $ c + \sum_{u=1}^{n}( u)$.

Therefore, we conclude that $L\notin \mathsf{REG}/n(2)$.
\end{proof}

\section{Language recognition with varying amounts of inkdots}
\label{sec:moredots}

Having shown that a single inkdot is a more powerful kind of advice to dfa's than arbitrarily long prefix strings, and that 
no combination of inkdots could match the full power of the track advice model, we now examine the finer structure of the classes of languages recognized by dfa's advised by inkdots. A natural question to ask is: How does the recognition power increase when one allows more and more inkdots as advice? Or equivalently: How stronger does a finite automaton get when one allows more and more $1$'s in the binary advice string written on its advice track? 

It is easy  to see that no more than $\floor{n/2}$ inkdots need be used on any input of length $n$: Any dfa that uses advice which contains more than this amount of inkdots
 can be imitated by another dfa that takes a ``flipped" advice obtained by swapping the marked and unmarked positions on the input, and performs the same computation.
   
We start by establishing the existence of an infinite hierarchy in the class of languages recognized by dfa's aided with a constant number of inkdots, independent of the input length.
It will be shown that  $m +1$ inkdots are stronger than $m$ inkdots as advice to dfa's for any value of $m$. The following 
family of languages, defined on the binary alphabet $\{0,1\}$, will be used in our proof:

$\mbox{For}\  m \in \mathbb{Z^+}, \ \mathtt{L}_m= \{  (0^i1^i)^{\ceil{m/2}}(0^i)^{ m+1-2\ceil{m/2}} ~\vert ~ i>0 \}. $ In other words, $\mathtt{L}_m$ is the set of strings that consist of $m+1$ alternating equal-length segments of 0's and 1's. $\mathtt{L}_1$, for example, is   the well known language $\{ 0^n1^n \mid n>0 \}$, and $\mathtt{L}_2 = \{ 0^n1^n0^n \mid n>0 \}$.

\begin{theorem}\label{theorem:inkdot_hierarchy}
For every  $m \in \mathbb{Z^+}$,  $ \mathsf{REG}/(m-1)  (\odot)  \subsetneq \mathsf{REG}/ m (\odot) $. 
\end{theorem}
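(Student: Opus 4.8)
The plan is to prove the two directions separately. The inclusion $\mathsf{REG}/(m-1)(\odot)\subseteq\mathsf{REG}/m(\odot)$ is immediate, since a machine permitted at most $m-1$ inkdots is a fortiori one permitted at most $m$. For strictness I would use the family $\mathtt{L}_m$ already introduced, whose member of length $n=i(m+1)$ is the string consisting of $m+1$ alternating blocks each of length $i$. To see $\mathtt{L}_m\in\mathsf{REG}/m(\odot)$, I place the $m$ inkdots on the last cells of the first $m$ blocks (positions $i,2i,\dots,mi$) whenever $n$ is a multiple of $m+1$, and place none otherwise. The advised dfa accepts exactly when the input is a $0$-started alternating string of precisely $m+1$ blocks whose symbol changes occur at the marked cells and nowhere else; because the marks sit at the equally spaced positions $i,2i,\dots,mi$, this pins every block length to $i$, so the unique accepted string of that length is the member.

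The substance is the separation $\mathtt{L}_m\notin\mathsf{REG}/(m-1)(\odot)$, which I would prove by contradiction combined with a pumping argument localized at a single block boundary. Suppose a dfa $M$ with $s$ states recognizes $\mathtt{L}_m$ using an advice function that never marks more than $m-1$ cells. Fix $d=s!$ and a threshold $\theta=s!+s$, choose $i>2\theta$, set $n=i(m+1)$, let $v$ be the accepted member of length $n$, and let $B_1,\dots,B_m$ be its $m$ block boundaries (after cells $i,2i,\dots,mi$). The enabling fact is the standard one that, reading a homogeneous run $\sigma^a$ from a fixed state, $M$ passes through an eventually periodic sequence of states with pre-period and period at most $s$; hence $\delta^*(q,\sigma^{a})=\delta^*(q,\sigma^{a+d})$ whenever $a\ge s$.

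For each $j$ let $\lambda_j$ denote the number of trailing cells of block $j$ carrying no mark, and $\rho_{j+1}$ the number of leading cells of block $j+1$ carrying no mark. Call $B_j$ \emph{guarded} if $\lambda_j<\theta$ or $\rho_{j+1}<\theta$. If some $B_j$ is \emph{un}guarded, then $\lambda_j\ge\theta$ and $\rho_{j+1}\ge\theta$, and I form $v'$ from $v$ by recoloring cells $ji+1,\dots,ji+d$ from $\sigma_{j+1}$ to $\sigma_j$: this lengthens block $j$ to $i+d$, shortens block $j+1$ to $i-d$, leaves all other cells and all marks untouched, and keeps the length at $n$ (so the advice is unchanged). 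Since $v$ and $v'$ differ only on these $d$ mark-free cells, and since $\lambda_j\ge s$ and $\rho_{j+1}-d\ge s$, the lengthened $\sigma_j$-run and the shortened $\sigma_{j+1}$-run pump to the very states reached in $v$; the computations then recombine on the identical suffix, so $M$ accepts $v'$ exactly as it accepts $v$. But $v'\notin\mathtt{L}_m$, a contradiction.

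It therefore suffices to rule out the case that \emph{every} $B_j$ is guarded, and this counting step is where I expect the main difficulty to lie. To each guarded $B_j$ I assign a witnessing mark, lying either among the last $\theta$ cells of block $j$ or among the first $\theta$ cells of block $j+1$. The claim is that this assignment is injective: a mark inside block $t$ can witness only $B_t$ (if it lies among the last $\theta$ cells of block $t$) or $B_{t-1}$ (if it lies among the first $\theta$ cells of block $t$), and because $i>2\theta$ these two zones of block $t$ are disjoint, so one mark serves at most one boundary. Injectivity then forces at least $m$ marks, contradicting the bound of $m-1$; hence some boundary is unguarded and the pump above applies. The points demanding care are exactly this disjointness-of-zones bookkeeping and the verification that $i$ is chosen large enough that the pump neither empties a block nor disturbs a mark.
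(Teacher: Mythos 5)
Your proposal is correct and follows essentially the same route as the paper's own proof: the same witness family $\mathtt{L}_m$, the same border-marking construction for the upper bound, and the same lower-bound argument that pigeonholes the $m-1$ marks to find a mark-free window around some block boundary and then pumps one run up and the adjacent run down by a common multiple of the loop periods, preserving length, advice, and acceptance. Your bookkeeping differs only cosmetically (using $d=s!$ and ``guarded'' boundaries with witnessing marks, where the paper uses the lcm of the two loop periods and disjoint $(4q^2+1)$-cell neighborhoods), so the two arguments are the same in substance.
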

\begin{proof}
We start by  showing that $m$ inkdots of advice is sufficient to recognize language $\mathtt{L}_m$ for any $m$.

Observe that $\mathtt{L}_m$  has no member of length $n$ if $n$ is not divisible by $m+1$, and it has exactly one member of length $n$ if  $n$ is divisible by $m+1$. A member string is made up of $m+1$ segments of equal length, each of which contains only one symbol. Let us call the positions $ (\frac{n}{m+1}+1),  (\frac{2n}{m+1}+1), \ldots, (\frac{mn}{m+1}+1)$ of a member of $\mathtt{L}_m$, where a new segment starts after the end of the previous one, the ``border points". If $m$ inkdots marking these $m$ border points  are provided as advice, a finite automaton can recognize $\mathtt{L}_m$  by simply accepting a string whose length is a  multiple of $m+1$ if and only if the input consists of alternating runs of 0's and 1's, and that all and only the marked symbols are different than their predecessors in the string.
We have thus proven that 
 $  \mathtt{L}_m\in \mathsf{REG}/m  (\odot)$ for $m  \in \mathbb{Z^+}$.

To show that 
  $\mathtt{L}_m\notin \mathsf{REG}/(m-1)  (\odot)$ for any $m$, suppose that there is a finite automaton $M$ which recognizes the language $\mathtt{L}_m$ with the help of $m -1$ inkdots for some  $m \in \mathbb{Z^+}$. Let $q$ be the number of states of $M$, and let $u$ be an integer greater than $4q^2$. 

Note that the string $s=(0^u1^u)^{\ceil{m/2}}(0^u)^{m+1-2\ceil{m/2}}$ is in $\mathtt{L}_m$, and so it should be accepted by $M$. We will use $s$ to construct another string of the same length which would necessarily be accepted by $M$ with the same advice, although it would  not be a member of $\mathtt{L}_m$.

Note that there are  $m$ border points in $s$. Let us call the $(4q^2+1)$-symbol-long substring of $s$ centered at some border point the \emph{neighborhood} of that border point.  Since $M$ takes at most $m -1$ inkdots as advice,  there should be at least one border point, say, $b$, whose neighborhood contains no inkdot.  Without loss of generality, assume that position $b$ of the string $s$ contains a 1, so the neighborhood is of the form $0^{2q^2}1^{2q^2+1}$. Since $M$ has only $q$ states, and there is no inkdot around, $M$ must ``loop" (\textit{i.e.} enter the same state repeatedly) both while scanning the 0's, and also while scanning the 1's of this neighborhood. Let  $d$  denote the least common multiple of the periods of the two loops (say, $p_1$ and $p_2$) described above. Note that as $p_1 \leq q$ and $p_2 \leq q$,  $d$ can not be greater than $q^2$. 

Now consider the new string $s^\prime$ that is constructed by replacing the aforementioned neighborhood $0^{2q^2}1^{2q^2+1}$ in $s$ with the string $0^{2q^2-d}1^{2q^2+d+1}$. $s^\prime$ is of the same length as $s$, and it is clearly not a 
member of $\mathtt{L}_m$, since it contains segments of different lengths. But $M$ would  nonetheless accept $s^\prime$ with the advice for this input length, since $M$'s computation on $s^\prime$ would be almost the same as that on $s$, with the exception that it loops $d/p_{2}$ more  times on the segment containing position $b$, and $d/p_1$ fewer times on the  preceding segment, which is still long enough (\textit{i.e.} at least $q^2$ symbols long) to keep $M$ looping. $M$ therefore enters the same states when it starts scanning each segment of new symbols during its executions on both $s$ and $s^\prime$, and necessarily ends up in the same state at the end of both computations. This means that  $M$ accepts a non-member of $\mathtt{L}_m$, which contradicts the assumption that $M$ recognizes $\mathtt{L}_m$ with $m - 1$ inkdots as advice.
\end{proof}

It has been shown that  every additional inkdot increases the language recognition power of dfa's that operate with constant amounts of advice. Extending the same argument, we now prove that more and more languages can be recognized if one allows the amount of inkdots given as advice to be faster and faster increasing functions of the length of the input.

\begin{theorem}\label{theorem:inkdot_function_hierarchy}
For every pair of functions $f,g:\mathbb{N}\rightarrow\mathbb{N}$ such that  $f(n) \in \omega(1) \cap o(n)$, if there exists $n_0\in\mathbb{Z}$ such that $g(n) < f(n)-2$ for all $n>n_0$, then $ \mathsf{REG}/ g(n)  (\odot)  \subsetneq \mathsf{REG}/ f(n)  (\odot)$. 
\end{theorem}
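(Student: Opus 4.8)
The plan is to generalize the family $\mathtt{L}_m$ used in Theorem 4 into a single language whose membership encodes, for each input length, exactly $f(n)$ border points, and then run the same looping/pumping argument to show that $g(n)$ inkdots are insufficient. First I would define a language $\mathtt{L}_f$ in which a string of length $n$ is a member precisely when $n$ is divisible by $f(n)+1$ (or some similar divisibility condition tied to $f$), and when the string consists of $f(n)+1$ alternating equal-length blocks of $0$'s and $1$'s. The upper bound is the easy direction: placing an inkdot at each of the $f(n)$ border points lets a dfa verify membership by checking that the input is of the form $(0^*1^*)^*$ and that the marked positions are exactly the block boundaries. Since there are $f(n)$ border points, this gives $\mathtt{L}_f \in \mathsf{REG}/f(n)(\odot)$.

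For the lower bound, I would argue that $\mathtt{L}_f \notin \mathsf{REG}/g(n)(\odot)$ by adapting the argument of Theorem 4. Suppose a dfa $M$ with $q$ states recognizes $\mathtt{L}_f$ using $g(n)$ inkdots. For a member string of large length $n$, there are $f(n)$ border points, but only $g(n) < f(n)-2$ inkdots are available. Because $f(n)\in\omega(1)$, the block length $n/(f(n)+1)$ grows without bound, so for $n$ large enough each block is longer than $4q^2$; the condition $f(n)\in o(n)$ is exactly what guarantees the blocks are long (the number of blocks grows more slowly than $n$). Since the number of inkdots $g(n)$ is strictly less than the number of border points by more than two, a counting argument shows at least one border point has a neighborhood of width $4q^2+1$ containing no inkdot. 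I would then run the identical pumping construction: $M$ must loop while scanning the long runs on both sides of this clean border, and replacing the neighborhood $0^{2q^2}1^{2q^2+1}$ by $0^{2q^2-d}1^{2q^2+d+1}$ (with $d$ the lcm of the two loop periods) yields a same-length non-member that $M$ nevertheless accepts, a contradiction.

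The delicate step, and the main obstacle I would expect, is handling the fact that both $f(n)$ and the advice bound $g(n)$ vary with $n$, so the length-to-structure correspondence must be set up carefully. In Theorem 4 the number of blocks was a fixed constant $m+1$; here it is $f(n)+1$, which forces $n$ and the block structure to be mutually consistent. I would restrict attention to lengths $n$ for which $f(n)+1$ divides $n$ and for which a valid alternating string of that block-count exists, and I must ensure such lengths occur for arbitrarily large $n$ so the asymptotic hypotheses ($n>n_0$, $n$ large enough that blocks exceed $4q^2$) can all be invoked simultaneously. The $o(n)$ hypothesis ensures the blocks are eventually long, while $\omega(1)$ ensures the border-point count grows so that the separation $g(n)<f(n)-2$ is meaningful for infinitely many lengths. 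Once the existence of suitable lengths is secured, the rest of the argument is a direct transcription of the single-border pumping step from the proof of Theorem~\ref{theorem:inkdot_hierarchy}, so I would not grind through those routine state-looping calculations again.
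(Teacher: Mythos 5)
Your plan has a genuine gap, and it sits exactly at the step you yourself flag as ``delicate'': you need lengths $n$ with $(f(n)+1)\mid n$ to occur for arbitrarily large $n$, but nothing in the hypotheses guarantees this, and for some admissible $f$ it is simply false. The theorem quantifies over \emph{all} $f\in\omega(1)\cap o(n)$, so $f$ may be adversarial: for each $n$ let $k=\lfloor\sqrt{n}\rfloor$ and set $f(n)=k$ if $(k+1)\nmid n$, and $f(n)=k+1$ otherwise. In the second case $(k+2)\nmid n$, since $k+1$ and $k+2$ are coprime and $(k+1)(k+2)>n$, so two consecutive candidates cannot both divide $n$; hence $(f(n)+1)\nmid n$ for every $n\geq 1$, while $f(n)\in\omega(1)\cap o(n)$ still holds. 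For such an $f$ your $\mathtt{L}_f$ has no members beyond some finite length, hence is regular and recognizable with zero inkdots, and the claimed lower bound $\mathtt{L}_f\notin\mathsf{REG}/g(n)(\odot)$ is false. A second, related hole is on the upper-bound side: in Theorem \ref{theorem:inkdot_hierarchy} the automaton itself verifies $|w|\equiv 0 \pmod{m+1}$ with a mod-$(m+1)$ counter, but in your generalization $f(n)+1$ is not constant, so the advised dfa cannot perform this divisibility check; for lengths carrying no member the advice must somehow force rejection, and your plan does not say how.

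The paper avoids both problems by changing the language rather than the argument: it sets $f'(n)=\lceil n/f(n)\rceil$ and lets $\mathtt{L}_f$ be the set of strings whose $i$'th symbol is $1$ iff $i$ is a multiple of $f'(n)$, with $0$'s elsewhere. This language has exactly one member of \emph{every} length---the block length, not the block count, is the primitive quantity, and the final block may be ragged---so no divisibility condition is needed anywhere. The upper bound marks the $\lfloor n/f'(n)\rfloor\leq f(n)$ positions of the $1$'s; for the lower bound, the number of $1$'s in the member of length $n$ lies in $\{f(n)-1,f(n)\}$ for large $n$, so $g(n)<f(n)-2$ inkdots leave at least one $1$ with an inkdot-free neighborhood flanked by $0$-runs long enough to force looping, and the pumping step of Theorem \ref{theorem:inkdot_hierarchy}---which you correctly identify as reusable---finishes the proof. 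If you wish to salvage your block-count formulation, you would at minimum have to allow an incomplete last block, which is in effect exactly the paper's construction.
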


\begin{proof}
Let $f$  be a function on $\mathbb{N}$  such that  $f(n) \in \omega(1) \cap o(n)$, and let $f'(n)= \ceil{n/f(n)}$.
 
Note that  $f'(n) \in \omega(1) \cap o(n)$, and $f(n) \geq n/f'(n)$ for all $n$.

Consider the language  $\mathtt{L}_f = \{ w \mid w=w_1 \cdots w_i \cdots w_n \mbox{ where } w_i \in \{0,1\} ~ \mbox{and} ~ w_i= 1 \mbox{ iff } i = k f'(n)$  for some $ k\in Z^+  \}.$

Each member $w$ of $\mathtt{L}_f$ is simply a binary string containing 1's at the $f'(|w|)$'th,  $2f'(|w|)$'th, etc. positions, and $0$'s everywhere else. Since the gaps between the $1$'s is $f'(|w|)$, \textit{i.e.} an increasing function of the input length, sufficiently long members of $\mathtt{L}_f$ can be ``pumped" to obtain nonmembers, meaning that $\mathtt{L}_f$ can not be recognized by a finite automaton without advice. However a finite automaton which takes inkdots as advice can  recognize $\mathtt{L}_f$ easily: The advice for length $n$ consists simply of inkdots placed on the exact positions where $1$'s are supposed to appear in a member string of this length. As a member with length $n$ contains at most $n/f'(n)$ $1$'s, $f(n)$ inkdots are sufficient. We conclude that $\mathtt{L}_f \in  \mathsf{REG}/ f(n)  (\odot)$. 

Now, suppose that $\mathtt{L}_f \in  \mathsf{REG}/ g(n) (\odot)$ for some function $g:\mathbb{N}\rightarrow\mathbb{N}$ such that $g(n) <f(n)-2$. Then there should be a finite automaton $M$ which would recognize $\mathtt{L}_f$ with the help of at most $g(n)$  inkdots.

Note that the precise number of $1$'s in a member of $\mathtt{L}_f$ of length $n$ is given by $f''(n) = \floor{n/f'(n)} =  \floor{n/ \ceil{n/f(n)} }$. For large values of $n$, $f''(n)$ takes values in the set $\{ f(n) -1, f(n)\}$,  since $f(n)\in o(n)$. Therefore $g(n) < f(n)-2$ for sufficiently long inputs implies that  $g(n)$ inkdots will  not be enough to mark all input positions where a member string of that length should contain a $1$. In fact, recalling that the distance between $1$'s in the  members of $\mathtt{L}_f$ of length $n$  is given by $f'(n)$, we see that  sufficiently long members of $\mathtt{L}_f$ must contain at least one 1 which has an inkdot-free neighborhood (in the sense of the term used in the proof of Theorem \ref{theorem:inkdot_hierarchy}), where the all-0 segments to the left and the right of the 1 are long enough to cause $M$ to loop. We can then use an argument identical to the one in that proof to conclude that a non-member of $\mathtt{L}_f$ also has to be accepted by $M$, reaching a contradiction.
\end{proof}

\section{Succinctness issues}
\label{sec:succinctness}

In this section, we  investigate the effects of  advice on the succinctness of finite automata. In particular, we will demonstrate a family of languages where the sizes of the associated minimal automata depend on whether advice is available, and if so, in which format.

 For a given integer $k>1$, we define  
$\mathtt{LANG}_k = \{ w \in \{0,1\}^*   \mid |w| <  k \mbox{ or }  w_{i+1} = 1  \mbox{ where }  i = |w|\mod  k  \}$.

\begin{theorem} 
 A dfa with no advice would need to have at least $2^k$ states in order to recognize $\mathtt{LANG}_k$.
 \end{theorem}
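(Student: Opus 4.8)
The plan is to apply the Myhill--Nerode theorem in its distinguishability (fooling set) form: I would exhibit a set of $2^k$ strings that are pairwise inequivalent with respect to $\mathtt{LANG}_k$, which forces any recognizing dfa to have at least $2^k$ states. The natural candidate set consists of all $2^k$ binary strings of length exactly $k$.

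The observation driving the construction is that, for any input $w$ with $|w|\geq k$, membership in $\mathtt{LANG}_k$ is decided by the single symbol $w_{i+1}$ among the first $k$ positions, where the index $i=|w|\bmod k$ depends only on the overall length of $w$. Since the length, and hence which of the first $k$ bits gets queried, is not determined until the input has been read in its entirety, a dfa is effectively compelled to remember all $k$ initial bits; the $2^k$ possible initial blocks are what I expect to translate into $2^k$ distinct states.

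Concretely, I would take two distinct length-$k$ strings $x\neq y$, which must differ in some position $j$ with $1\le j\le k$, and choose the distinguishing suffix $z=0^{\,j-1}$. Then $|xz|=|yz|=k+j-1\geq k$ and $(k+j-1)\bmod k=j-1$, so the membership condition for both $xz$ and $yz$ queries position $(j-1)+1=j$, reading $x_j$ and $y_j$ respectively. As $x_j\neq y_j$, exactly one of these symbols equals $1$, so exactly one of $xz,yz$ lies in $\mathtt{LANG}_k$. Note that the total length always stays $\geq k$, so the ``$|w|<k$'' clause never interferes with the argument.

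Because every pair among the $2^k$ distinct length-$k$ strings is separated by such a suffix, these strings fall into $2^k$ distinct Myhill--Nerode equivalence classes, and the minimal dfa for $\mathtt{LANG}_k$ must have at least $2^k$ states. The only point requiring genuine care -- and the step I would flag as the crux -- is verifying that the suffix length can always be selected to target the exact position $j$ where $x$ and $y$ disagree while keeping the combined length in the ``$\geq k$'' regime; this is precisely what the choice $|z|=j-1\in\{0,\dots,k-1\}$ guarantees.
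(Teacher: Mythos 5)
Your proposal is correct and follows essentially the same argument as the paper: both distinguish the $2^k$ strings of length $k$ via the Myhill--Nerode theorem, using a suffix $z$ whose length is congruent to $j-1$ modulo $k$ so that the membership test queries exactly the position where $x$ and $y$ differ. Your choice $z=0^{j-1}$ is just a concrete instance of the paper's condition $|z| \bmod k = j-1$.
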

 
  \begin{proof}
Let $x,y$ be any  pair of distinct strings, $x,y \in  \{0,1\}^k $. 
There exists a position $i$ such that $x_i \neq y_i$. Without loss of generality,  assume that $x_i=0$ and $y_i = 1$. Then for any string $z \in \{0,1\}^*$ with $ i -1 = |z|\mod  k$, we have $xz \notin \mathtt{LANG}_k$ and $yz\in \mathtt{LANG}_k$. In other words, the index of $\mathtt{LANG}_k$ (in the sense of the Myhill-Nerode theorem (see \textit{e.g.} (\cite{HU79})) is at least as big as the number of distinct strings in  $\{0,1\}^k$, namely, $2^k$. Therefore, a dfa would need at least that many states in order to recognize $\mathtt{LANG}_k$.  
\end{proof}

Note that testing membership in $\mathtt{LANG}_k$ is as easy as checking whether the $i+1$'st symbol of the input  is a $1$  or a $0$,  where the length $|w|$ of the input word satisfies $|w| = m k +i $. The problem is that in the  setting without the advice, the value $i$ is learned only  after the machine scans the last symbol. $i$, however is a function of the length of the input, and advice can be used to convey this information  to the automaton at an earlier stage of its computation. 

\begin{theorem} 

$\mathtt{LANG}_k$ can be recognized by a ($k+3$)-state dfa with the help of prefix advice. However, no dfa with fewer than $k$ states can recognize $\mathtt{LANG}_k$  with prefix advice.
 
\end{theorem}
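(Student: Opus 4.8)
The plan is to prove the two halves of the statement separately: an upper bound (a construction achieving $k+3$ states with prefix advice) and a lower bound (no dfa with fewer than $k$ states can recognize $\mathtt{LANG}_k$ even with prefix advice).

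For the upper bound, recall that membership in $\mathtt{LANG}_k$ depends only on the symbol at position $i+1$, where $i = |w| \bmod k$. The key observation is that $i$ is a function of the input length alone, so it is exactly the kind of information prefix advice is designed to supply. First I would use the prefix advice to encode the value $i \in \{0,1,\dots,k-1\}$, which selects one of $k$ starting states for the processing of the actual input; this costs the machine nothing in states beyond having $k$ distinguishable entry points. Once the machine knows $i$, its task reduces to: count input symbols modulo $k$ until it reaches position $i+1$, check whether that symbol is a $1$ (accept-provisionally) or $0$ (reject), and also separately accept all inputs of length less than $k$. I would implement this with a small cyclic counter: roughly $k$ states to track the residue class and remember the verdict, plus a constant number of extra states to handle the ``short input'' case (length $< k$) and to serve as dedicated accept/reject sink states. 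The claim is that $k+3$ states suffice; I would exhibit the automaton explicitly (states for the counting phase, one state remembering ``the critical symbol was a $1$,'' and sink states) and verify the count matches $k+3$.

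For the lower bound, the goal is to show that even \emph{with} prefix advice, fewer than $k$ states are impossible. The natural tool is a Myhill--Nerode / fooling-set argument adapted to the prefix-advice setting. The essential point is that prefix advice depends only on the input length, so for any two inputs of the \emph{same length} the machine receives the same advice and therefore begins processing the actual input in the same state. I would fix a single input length $n$ (chosen so that $n \bmod k$ ranges appropriately) and exhibit $k$ strings of that length that must be pairwise distinguishable by the post-advice computation: for each residue class, strings whose critical symbol differs force different accept/reject behavior, and because the advice is identical across same-length inputs, these distinctions must be realized by the $k$ distinct states the automaton can be in. More precisely, I would construct $k$ input prefixes that, under the common advice, drive $M$ into states that must all be distinct, since each pair can be separated by a suitable suffix keeping the total length fixed.

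\textbf{The main obstacle} I anticipate is the lower-bound argument, specifically pinning down the right distinguishing set so that all $k$ separations happen \emph{within a single advice value}. The subtlety is that prefix advice can differ across lengths, so a naive Myhill--Nerode count (which mixes all lengths) would only reconstruct the unadvised bound and prove nothing new; the advice could in principle ``absorb'' the distinctions. The careful step is to confine the entire fooling set to one length $n$, so that the advice is a fixed constant and cannot help separate the strings, and then argue that $k$ mutually inequivalent post-advice configurations are forced. I would want to verify that for this fixed $n$ there genuinely exist $k$ strings whose critical positions fall in $k$ different residue classes and whose completions remain at length $n$, yielding $k$ states that cannot be merged.
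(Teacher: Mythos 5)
Your first half (the $k+3$-state construction) is sound and is essentially the paper's approach: the advice encodes $i = |w| \bmod k$, with a special advice value for lengths below $k$, and the automaton uses it to locate and test position $i+1$. The paper implements this with the advice string $0^i10^{k-i-1}$ and a machine that idles on $0$'s, fires on the advice's unique $1$, and checks the symbol exactly $k$ positions later; your ``entry state'' formulation achieves the same thing with the same state budget.

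The lower bound is where your proposal breaks down, and the failure is structural, not a matter of detail. For a fixed length $n \geq k$, membership in $\mathtt{LANG}_k$ is determined by a \emph{single} position $c = (n \bmod k) + 1$: a length-$n$ string is in the language iff its $c$-th symbol is $1$. Consequently, among prefixes of any common length $\ell$ (and the prefixes in your fooling set must have a common length, since a common distinguishing suffix must produce two strings of the same total length $n$, or else they receive different advice), suffix-distinguishability yields at most \emph{two} classes: for $\ell \geq c$ the class is decided by the prefix's $c$-th symbol, and for $\ell < c$ all prefixes are equivalent, since membership is then decided by the suffix alone. In particular, your plan to exhibit ``$k$ strings whose critical positions fall in $k$ different residue classes'' within a single length is impossible: all strings of one length share one residue class and hence one critical position. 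Any Myhill--Nerode/fooling-set argument confined to a single length therefore tops out at a lower bound of $2$ states, and cannot reach $k$.

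What forces $k$ states is not distinguishability of prefixes but the need to \emph{count} up to position $c$, and the paper captures this with a pumping argument that keeps the length (hence the advice) fixed by pumping two runs in opposite directions. Take $s = 0^{k-1}10^{k^2+k-1} \in \mathtt{LANG}_k$. A machine with fewer than $k$ states must loop with some period $p < k$ on the $0$-run to the left of the $1$ and with some period $q < k$ on the $0$-run to its right. Then $s' = 0^{k-1+pq}10^{k^2+k-1-pq}$ has the same length as $s$, so it receives the same advice; it is processed through the same loops (the right run stays long enough to loop, since $pq \leq (k-1)^2 < k^2$), so it is accepted; yet $s' \notin \mathtt{LANG}_k$, because its $k$-th symbol is now a $0$. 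This simultaneous pump-up/pump-down within one input length is the key idea missing from your proposal; without it, or something equivalent, the second half of the theorem does not follow.
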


\begin{proof}
We describe a ($k+3$)-state machine $M_1$, which takes binary prefix advice of length $k$ to recognize $\mathtt{LANG}_k$. For inputs of length less than $k$, the advice is $0^k$. For longer inputs of length $i$ (mod $k$), the advice is the string $0^{i}10^{k-i-1}$. 

$M_1$ is depicted in Figure 1, where double circles are used to indicate accepting states.
$M_1$ remains at its initial state, which is an accepting state, as long as it scans $0$'s on the combined advice-input string. If it scans a $1$, it attempts to verify if the $k$'th symbol after that $1$ is also a $1$. If the input ends before that position is reached, $M_1$ accepts. If the input is sufficiently long to allow that position to be reached, $M_1$ accepts if it scans a $1$ there, and rejects otherwise. It is easy to see that $k+3$ states are sufficient to implement $M_1$.

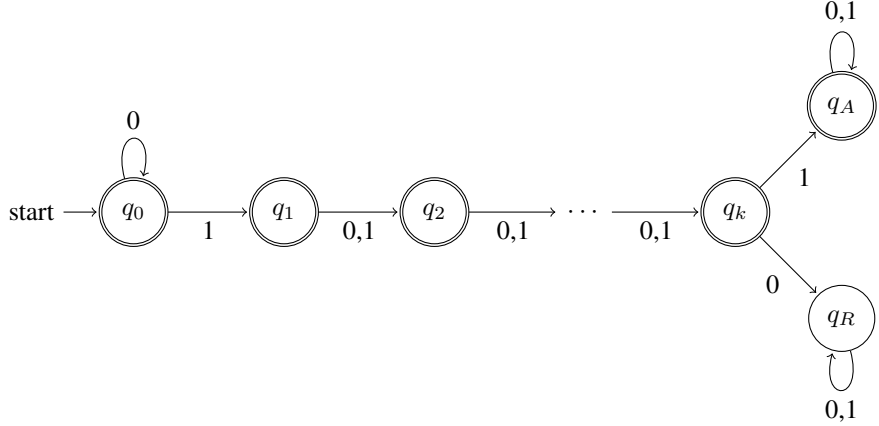
\begin{figure}\label{fig:m1}
\begin{center}
\begin{tikzpicture}[shorten >=1pt,node distance=2cm,on grid,auto] 
   \node[state,initial,accepting] (q_0)   {$q_0$}; 
   \node[state,accepting] (q_1) [ right=of q_0] {$q_1$}; 
   \node[state,accepting] (q_2) [ right=of q_1] {$q_2$}; 
   \node        (q_dots) [right=of q_2] {$\cdots$}; 

   \node[state,accepting] (q_k) [ right=of q_dots] {$q_k$}; 
   \node[state,accepting] (q_A) [above right=of q_k] {$q_A$}; 
   \node[state] (q_R) [below right=of q_k] {$q_R$};

    \path[->] 
    (q_0) edge [loop above] node {0} ()
	   edge  node [swap] {1} (q_1)
    (q_1) edge node [swap] {0,1} (q_2)
    (q_2) edge node [swap] {0,1} (q_dots)
    (q_dots) edge node [swap] {0,1} (q_k)
    (q_k) edge node [swap] {1} (q_A)
    (q_k) edge node [swap] {0} (q_R)
    (q_A) edge [loop above] node {0,1} ()
    (q_R) edge [loop below] node {0,1} ()

;

\end{tikzpicture}
\caption{Finite automaton $M_1$ with k+3 states recognizes $\mathtt{LANG}_k$ with prefix advice.}
\end{center}
\end{figure}

Let us now assume that a ($k-1$)-state finite automaton $M_2$ recognizes $\mathtt{LANG}_k$ with the help of prefix advice. Then $M_2$ must accept the string $s = 0^{k-1}10^{k^2 + k-1}$, which is a member of $\mathtt{LANG}_k$, utilizing the advice for ($k^2+2k-1$)-symbol-long inputs. Since the $0$ sequences  to the left and right of the single $1$ in $s$ are of length at least $k-1$,  $M_2$  must ``loop" during its traversals of these sequences while scanning $s$. Let $p<k$ and $q<k$ be the periods of these loops to the left and  right of $1$, respectively.  Now consider the string $s' = 0^{k-1 + pq}10^{k^2 + k-1 - pq}$. Note that $s'$ is of the same length as $s$, and therefore it is assigned the same advice as $s$. Also note that $M_2$ must enter the same states at the ends of the $0$ sequences on both $s'$ and $s$, since  $q$ additional iterations of the loop to the left of the $1$ and $p$ fewer iterations of the loop to the right of the $1$ does not change the final states reached at the end of these $0$ sequences. (The ``pumped down'' version of the zero sequence to the right of $1$ is $k^2 + k-1 - pq$ symbols long and since we have $pq < k^2$ this means it is still sufficiently long (\textit{i.e.} longer than $k$ symbols) to ensure that  it causes at least one iteration of the loop.) This implies that $M_2$ should accept $s'$ as well, which is a contradiction, since $s'$ is not a member of $\mathtt{LANG}_k$. 
\end{proof}

\begin{theorem} 
  There exists a dfa with just two states that can recognize $\mathtt{LANG}_k$ with the help of inkdot advice for any $k$.
\end{theorem}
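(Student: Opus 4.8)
The plan is to exploit the observation, already made in the text preceding this statement, that whether a string $w$ of length $n \geq k$ belongs to $\mathtt{LANG}_k$ is determined by a single symbol of $w$: the symbol in position $i+1$, where $i = n \bmod k$. Since the advice may depend on $n$, I would use one inkdot to point the automaton directly at this decisive position, so that the machine need remember essentially nothing and can get by with two states. In fact this shows the statement holds already within $\mathsf{REG}/1(\odot)$.

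Concretely, I would first specify the advice function. For every input length $n \geq k$, it places a single inkdot on position $(n \bmod k) + 1$; this index lies between $1$ and $k$, hence is always a legal position because $n \geq k$. For input lengths $n < k$, it places no inkdot at all. Next I would build the two-state dfa $N$ over the extended alphabet $\{0,1,\dot{0},\dot{1}\}$, with an accepting initial state $q_0$ and a rejecting sink $q_1$. The transitions are: from $q_0$, reading any of $0$, $1$, or $\dot{1}$ keeps $N$ in $q_0$, while reading $\dot{0}$ sends it to $q_1$; from $q_1$ every symbol loops back to $q_1$. Thus $N$ accepts exactly when it never scans a dotted $0$.

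Finally I would verify correctness in the two cases. For $n < k$ there is no inkdot, so $N$ never leaves $q_0$ and accepts, matching the clause $|w| < k$ in the definition of $\mathtt{LANG}_k$. For $n \geq k$ the unique inkdot sits on $w_{i+1}$, and $N$ halts in $q_0$ (accepts) if and only if that symbol is a $1$, which is precisely the membership condition. There is little to grind through here; the only steps demanding care are the boundary conditions, namely confirming that the marked index $(n \bmod k)+1$ never exceeds $n$ (guaranteed by $n \geq k$) and ensuring the short-input regime is dispatched by the ``no inkdot'' convention rather than by extra states. That boundary bookkeeping is the mild obstacle, and once it is handled the two-state construction is immediate.
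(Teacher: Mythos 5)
Your proposal is correct and is essentially the paper's own proof: the same advice function (no inkdot for $|w|<k$, an inkdot on position $(|w| \bmod k)+1$ otherwise) and the same two-state automaton that loops in its accepting initial state on $0$, $1$, $\dot{1}$ and falls into a rejecting sink only upon reading $\dot{0}$. The boundary bookkeeping you flag (the marked position being legal and short inputs being accepted by default) is handled identically in the paper.
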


\begin{proof}
For inputs of length less than $k$, no inkdots are provided as advice. For longer inputs whose length is $i$ mod $k$, the advice is an inkdot on the $i+1$'st symbol of the input. A dfa that accepts if and only if either senses no inkdots or if the single marked position contains a 1 can be constructed using two states, as seen in Figure 2.
\end{proof}

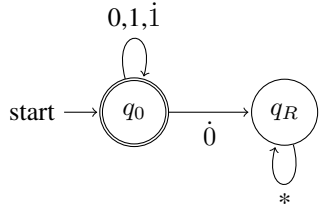
\begin{figure}
\begin{center}
\begin{tikzpicture}[shorten >=1pt,node distance=2cm,on grid,auto] 
   \node[state,initial,accepting] (q_0)   {$q_0$}; 
   \node[state] (q_R) [ right=of q_0] {$q_R$};

    \path[->] 
    (q_0) edge [loop above] node {0,1,$\dot{1}$ } ()
    (q_0) edge node [swap] {$\dot{0}$  }    (q_R)
    (q_R) edge [loop below] node {*} ();

\end{tikzpicture}
\caption{Finite automaton with 2 states recognizes $\mathtt{LANG}_k$ with inkdot advice.}
\end{center}
\end{figure}

\section{Randomized inkdot advice to finite automata}
\label{sec:random}

It is known that ``randomized" advice, selected for each input from a set of alternatives according to a particular distribution, is even more powerful than its deterministic counterpart in several models (see (\cite{Ya10}) and (\cite{KSY14})). In this section, we will show that a similar increase in power is also the case for the inkdot advice model, even with a restriction to a constant number of inkdots.

For each input length $n$, the advice is a set of pairs of the form $\langle I, p\rangle$, where $I$ is an inkdot pattern to be painted on an $n$-symbol-long string, and $p$ is the probability with which pattern $I$ is to be used, with the condition that the probabilities add up to 1. Whenever the advised dfa is presented with an input, an inkdot pattern is selected from the advice with the corresponding probability. A language is said to be \emph{recognized with bounded error} by such a machine if the automaton responds to each input string with the correct accept/reject response with probability at least $\frac{2}{3}$. The class of languages recognized with bounded error by a dfa aided with randomized advice in the advice track model is called $ \mathsf{REG}/ Rn $ in (\cite{Ya10}). We call the corresponding classes for $k$ inkdots $ \mathsf{REG}/ Rk(\odot) $ for $k \in \mathbb{Z^+}$, and $ \mathsf{REG}/ Rn(\odot) $ when there are no restrictions on the number of inkdots.

By importing Proposition 16 in (\cite{Ya10}) about the advice track model  directly to the inkdot advice case, one sees immediately that $ \mathsf{REG}/ n(\odot)\subsetneq  \mathsf{REG}/ Rn(\odot) $. We will present a new result, showing that randomization adds power even when the number of inkdots is restricted to be a constant.

\begin{theorem}\label{theorem:randomized_advice}
$ \mathsf{REG}/ 2(\odot)  \subsetneq \mathsf{REG}/ R2 (\odot)$. 
\end{theorem}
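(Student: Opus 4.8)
The plan is to prove $\mathsf{REG}/2(\odot)\subsetneq\mathsf{REG}/R2(\odot)$ by exhibiting a single language that separates the two classes. Since the inclusion $\mathsf{REG}/2(\odot)\subseteq\mathsf{REG}/R2(\odot)$ is immediate (a deterministic advice pattern is just a degenerate distribution placing probability $1$ on one inkdot pattern), all the work lies in constructing a witness language $L$ that is recognizable with bounded error using two randomly placed inkdots, but provably not recognizable by any dfa aided with two deterministically placed inkdots.

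For the witness, I would look for a language whose membership for each input length depends on \emph{two independent pieces of positional information}, chosen so that a single deterministic inkdot pattern (only two dots) cannot simultaneously encode enough to handle all inputs of a given length, whereas a probability distribution over two-dot patterns can. A natural candidate is a language built over a three-segment structure, e.g.\ strings of the form $\mathtt{L}=\{x\,c\,y \mid |x|=|y|,\ \text{some coincidence condition on }x,y\}$, where the randomized advice places its two inkdots on a \emph{randomly chosen matching pair of positions} in the two halves, and the automaton checks local consistency at the dotted positions, accepting with high probability iff the global condition holds. The idea is that randomness lets the advice ``sample'' a position to test, catching a violating input with constant probability, while two fixed dots can only pin down two positions and can be fooled on the remaining ones.

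The two halves of the argument then proceed as follows. First, for the positive direction I would describe the advice distribution: for inputs of the target length, distribute probability (uniformly, or with carefully chosen weights) over a family of two-inkdot patterns, each testing one ``coordinate'' of the consistency condition, and verify that a correct input is accepted with probability $1$ while any incorrect input is rejected with probability at least $\tfrac13$; if the single-sample error is worse than $\tfrac13$, I would note that a constant number of independent samples (still a constant number of inkdots, bundled into the pattern) amplifies the gap, or adjust the accept/reject convention. Second, for the negative direction I would use a pumping/counting argument in the spirit of the proof of Theorem~\ref{theorem:inkdot_hierarchy}: fix a purported two-inkdot dfa $M$ with $q$ states, choose an input length large enough that the two inkdots leave some long inkdot-free segment, show $M$ must loop there, and pump to produce a same-length string that flips membership yet is accepted (or rejected) identically, contradicting correctness. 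Invoking Fact~\ref{yamakami_characterization} to lower-bound the number of equivalence classes needed could serve as an alternative route to this impossibility.

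The hard part will be the positive direction's error analysis, specifically designing the witness $L$ and the advice distribution so that a genuinely \emph{constant} number of inkdots suffices to drive the one-sided error below $\tfrac13$ uniformly over all input lengths and all non-members. The subtlety is that with only two inkdots the automaton gets very few bits of sampled information per run, so the language must be structured so that a single random positional test already has a constant detection probability for \emph{every} violation, independent of $n$; balancing this against the deterministic lower bound (which forces $L$ to require ``too many'' fixed positions to be pinned down) is the crux, and I would spend most of the effort verifying that these two requirements are compatible for the chosen $L$.
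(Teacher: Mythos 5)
Your overall skeleton matches the paper's: the inclusion is dismissed as trivial, and the separation is to come from a witness language that needs three ``positional facts'' to be verified but only two dots are available, with a pumping argument in the style of Theorem~\ref{theorem:inkdot_hierarchy} for the deterministic lower bound. However, there is a genuine gap, and it sits exactly where you yourself locate the crux: you never exhibit a witness language for which a distribution over \emph{two-dot} patterns achieves \emph{constant} one-sided error, and the candidate you sketch does not work. For a language of the form $\{x\,c\,y \mid |x|=|y|,\ \text{coincidence condition on } x,y\}$ with the advice sampling a ``randomly chosen matching pair of positions,'' the distribution must spread over $\Theta(n)$ patterns, so an input violating the condition in a single coordinate is caught with probability only $O(1/n)$ --- the error is not bounded away from $1/2$ uniformly in $n$. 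Your fallback, amplifying by bundling ``a constant number of independent samples'' into the pattern, is not available either: extra samples mean extra inkdots, which takes you out of $\mathsf{REG}/R2(\odot)$ and proves a different (weaker) statement.

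The paper closes this gap by choosing the witness so that, \emph{after the dfa's free regular checks}, every possible violation is localized to one of exactly three known positions. It takes $\mathtt{L}_3=\{0^m1^m0^m1^m \mid m>0\}$, which is outside $\mathsf{REG}/2(\odot)$ by the proof of Theorem~\ref{theorem:inkdot_hierarchy}. The randomized advice for length $4m$ marks one of the three pairs of border points $(m+1,2m+1)$, $(m+1,3m+1)$, $(2m+1,3m+1)$, each with probability $\frac{1}{3}$; the dfa checks that the input has the form $0^+1^+0^+1^+$, that its length is a multiple of $4$, and that every inkdot it sees lies on a symbol differing from its predecessor. A member passes all checks under every pattern (acceptance probability $1$). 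A nonmember of the form $0^+1^+0^+1^+$ has exactly three segment boundaries, not all of which can coincide with the true border points, so at most one of the three advice patterns marks only genuine boundaries; hence it is accepted with probability at most $\frac{1}{3}$. This ``choose two of three fixed tests at random'' structure is the missing idea: it is what makes the detection probability a constant independent of $n$ while still using only two inkdots, and it is precisely the compatibility between the upper and lower bounds that your proposal recognizes as necessary but leaves unverified.
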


\begin{proof}
As deterministic advice is a special case of the probabilistic version, the inclusion is trivial.
In order to show the separation, consider the language $\mathtt{L}_3=\{0^m1^m0^m1^m \mid m>0\}$, defined as part of a family in  Section \ref{sec:moredots}. By the proof of Theorem \ref{theorem:inkdot_hierarchy}, $\mathtt{L_3}\notin \mathsf{REG}/2  (\odot)$. It remains to show that
 $\mathtt{L_3}\in \mathsf{REG}/R2  (\odot)$.

Recall from the proof of Theorem \ref{theorem:inkdot_hierarchy} that we call the $m+1$'st, $2m+1$'st, and $3m+1$'st positions of the string $0^m1^m0^m1^m$ the ``border points".  We build a dfa that will take two inkdots as advice to recognize $\mathtt{L_3}$. In the randomized advice that will be given for inputs of length $4m$, the three pairs of border points $(m+1, 2m+1)$, $(m+1, 3m+1)$, and $(2m+1, 3m+1)$ will be marked with probability $\frac{1}{3}$ each. The dfa simply checks whether its input is of the form $0^+1^+0^+1^+$, the input length is a multiple of 4, and also whether each inkdot that it sees is indeed on a symbol unequal to the previously scanned symbol. The machine accepts if and only if all these checks are successful.  

If the input string is a member of $\mathtt{L}_3$, all checks will be successful, and the machine will accept with probability 1. For a nonmember string $s$ of length $4m$ to be erroneously accepted, $s$ must be of the form $0^+1^+0^+1^+$, and contain only one ``false border point," where a new segment of 1's (or 0's) starts at an unmarked position after a segment of 0's (or 1's). But this can happen with probability at most $\frac{1}{3}$, since two of the three possible inkdot patterns for this input length must mark a position which contains a symbol that equals the symbol to its left. 
\end{proof}

\section{Advised computation with arbitrarily small space}
\label{sec:sublogspace}


In this section, we will consider one way Turing machines, instead of finite automata, as the underlying advised machine model in order to explore the effects of combining inkdot advice with non-constant, but still very small amounts of memory. 

It is known that unadvised deterministic Turing machines with sublogarithmic space which scan their input once from left to right can only recognize regular languages (see (\cite{SHL65})). Therefore such small amounts of additional workspace does not increase the computational power of dfa's.  On the other hand, sublogarithmic space can be fruitful  for nonuniform computation. In (\cite{RCH91}), for example, $\log \log n$ space was proven to be necessary and sufficient for a demon machine, which is defined as a Turing machine whose worktape is limited with endmarkers to a prespecified size determined by the length of the input,  to recognize the nonregular language $\{a^nb^n \mid n > 1\}$. 

Deterministic machines, ($\log \log n$ space Turing machines in particular) was shown in (\cite{RCH91}) not to gain any additional computational power if they are also provided with the ability of marking one input tape cell with an inkdot.



Below, we will show that a one-way Turing machine which has simultaneous access to arbitrarily slowly increasing amounts of space and one inkdot provided as advice can effectively use both of these resources in order to extend its power of language recognition.  Note that the head on the single work tape of the TM is allowed to move in both directions.
  
We extend our notation for the purposes of this section as follows. The class of languages recognized by one-way-input Turing machines which use $s(n)$ cells in their work tape when presented inputs of length $n$ will be denoted by $\mathsf{1SPACE}(s(n))$. With an advice of $k$ inkdots, the corresponding class is called $\mathsf{1SPACE}(s(n))/k(\odot)$.

\begin{theorem} \label{th:sublogspace}
For any slowly increasing function  $g(n):\mathbb{Z}^+ \rightarrow \mathbb{Z}^+$, where $g(n) \in \omega(1) \cap o(n)$, 
\begin{itemize}
\item[-] $ \mathsf{1SPACE}(\log g(n)) \subsetneq \mathsf{1SPACE}(\log  g(n))/1  (\odot)$,
\item[-] $ \mathsf{REG}/1  (\odot) \subsetneq \mathsf{1SPACE}(\log g(n))/1  (\odot)$. 
\end{itemize}

In other words, automata with access to arbitrarily slowly increasing amounts of space and a single inkdot as advice are more powerful than those which lack either of these resources.
\end{theorem}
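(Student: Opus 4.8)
The plan is to exhibit a single language $L_g$ that lies in $\mathsf{1SPACE}(\log g(n))/1(\odot)$ but in neither $\mathsf{1SPACE}(\log g(n))$ nor $\mathsf{REG}/1(\odot)$. Since a no-advice machine and an inkdot-advised dfa are both special cases of the advised space model, the two inclusions $\mathsf{1SPACE}(\log g(n))\subseteq\mathsf{1SPACE}(\log g(n))/1(\odot)$ and $\mathsf{REG}/1(\odot)\subseteq\mathsf{1SPACE}(\log g(n))/1(\odot)$ are immediate, so one such $L_g$ witnesses both strict separations at once. I would take $L_g$ to contain, for each sufficiently large length $n$, the single string $0^{\,n-2g(n)}1^{g(n)}0^{g(n)}$: the leading run of $0$'s is padding that keeps the total length equal to $n$, while the genuinely informative suffix $1^{g(n)}0^{g(n)}$ has length only $2g(n)=o(n)$. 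The advice for length $n$ is one inkdot on position $n-2g(n)+1$, i.e. on the first symbol of the suffix.

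To place $L_g$ in $\mathsf{1SPACE}(\log g(n))/1(\odot)$ I would describe a one-way machine that rejects the moment it reads a $1$ before reaching the inkdot, and upon sensing the inkdot switches to verifying that the remainder has the form $1^{t}0^{t}$: it counts the $1$'s into a binary counter, then counts the $0$'s down, accepting iff the two counts agree and the input then ends. Because only $2g(n)$ cells follow the inkdot, every counter value stays below $2g(n)$ even on malformed inputs, so $\log g(n)+O(1)$ work cells always suffice; and the length constraint forces the shared block length to equal exactly $g(n)$, so the machine accepts precisely the member of length $n$.

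For $L_g\notin\mathsf{REG}/1(\odot)$ I would reuse the pumping technique from the proof of Theorem \ref{theorem:inkdot_hierarchy}. A member has two border points, the $0\to1$ transition at $n-2g(n)+1$ and the $1\to0$ transition at $n-g(n)+1$, separated by $g(n)\to\infty$ positions. A $q$-state dfa with a single inkdot leaves at least one of these borders with an inkdot-free neighbourhood once $g(n)$ exceeds the neighbourhood width; both sides of either border are long enough (length $\Theta(n)$ or $g(n)$) to force a loop, so a length-preserving modification \emph{local} to the inkdot-free border, redistributing length between its two adjacent blocks, leaves the other border and the inkdot untouched yet produces a non-member that is accepted under the same advice, a contradiction. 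For $L_g\notin\mathsf{1SPACE}(\log g(n))$ I would use that $g$ is slowly increasing, so $\log g(n)\in o(\log n)$ and hence $\mathsf{1SPACE}(\log g(n))=\mathsf{REG}$ by \cite{SHL65}; it then remains to observe that $L_g$ is non-regular, which follows from a Myhill--Nerode argument distinguishing the prefixes $0^{a}1^{t}$ for the infinitely many distinct block lengths $t=g(\cdot)$ using suffixes $0^{t}$ (the block-equality constraint of $1^{t}0^{t}$ is exactly what fails). A direct crossing-sequence argument on the long padding prefix gives the same conclusion without invoking \cite{SHL65} whenever $g$ is sub-polynomial.

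The main obstacle is the joint design: $L_g$ must be arranged so that \emph{each} resource is indispensable yet \emph{together} they succeed. The inkdot is forced because a one-way sublogarithmic-space machine cannot locate the $n$-dependent boundary between the $\Theta(n)$-long padding and the suffix (that position effectively encodes $n$ and would need $\Omega(\log n)$ space to find by counting), while the space is forced because a finite-state machine, even with its one inkdot spent on that boundary, cannot check the equality of the two length-$g(n)$ blocks. The delicate points to get right are keeping the counters inside the $\log g(n)$ budget on \emph{all} inputs of a given length---handled above by noting that only $2g(n)$ cells lie beyond the inkdot---and confirming that the ``slowly increasing'' hypothesis on $g$ is precisely what makes both lower bounds go through.
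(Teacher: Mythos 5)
Your proposal is correct, and it reaches the theorem by a genuinely different route than the paper. The paper's witness is the language of strings $s_1\#s_2\#\cdots\#s_m\#^+$ in which every subword $s_i$ has the form $0^*10^*$ and length $\lfloor g(n)\rfloor$, and the position of the $1$ drifts by at most one between consecutive subwords; its single inkdot marks position $\lfloor g(n)\rfloor$ so the machine can first \emph{learn} $\lfloor g(n)\rfloor$ by counting and then verify all subwords, and the finite-automaton lower bound is obtained not by pumping but from Yamakami's characterization (Fact \ref{yamakami_characterization}), which shows the language lies outside all of $\mathsf{REG}/n$, hence a fortiori outside $\mathsf{REG}/1(\odot)$. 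That stronger conclusion is what supports the paper's closing remark that the same argument carries over to the advice-track and advice-tape models. Your single-member-per-length language $0^{n-2g(n)}1^{g(n)}0^{g(n)}$ buys simplicity: the machine never needs to learn $g(n)$ at all (the length constraint pins the block length down), and the dfa lower bound reuses the two-border pumping argument of Theorem \ref{theorem:inkdot_hierarchy} instead of Fact \ref{yamakami_characterization}; your membership, pumping, and Myhill--Nerode steps all check out (the pumped string remains a non-member because each length has a unique member, and the window-local surgery leaves the inkdot and the other border untouched). The price is that your argument is intrinsically tied to one inkdot: with only two borders, two inkdots would cover both neighbourhoods, and indeed your language lies trivially in $\mathsf{REG}/n$ (the unique member of each length can be written on the track), so it could never yield the paper's stronger non-membership or its extension to track advice. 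One caveat you share with the paper: concluding $\mathsf{1SPACE}(\log g(n))=\mathsf{REG}$ from \cite{SHL65} requires $\log g(n)\in o(\log n)$, which does \emph{not} follow from $g(n)\in\omega(1)\cap o(n)$ alone (consider $g(n)=\lceil n/\log n\rceil$, for which $\log g(n)=\Theta(\log n)$); both your proof and the paper's implicitly read ``slowly increasing'' as excluding such $g$, and your fallback crossing-sequence remark for sub-polynomial $g$ is the right patch for exactly this point.
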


\begin{proof}
Both inclusions are straightforward, and we proceed to  show the separation results. First, note that $ \mathsf{1SPACE}(\log g(n)) \subseteq \mathsf{REG}/k(\odot) $ for any such $g$ and any $k$, since     one-way deterministic Turing machines can not utilize sublogarithmic space in order to recognize nonregular languages (see (\cite{SHL65})), and are therefore computationally equivalent to dfa's without advice. It will therefore be sufficient to demonstrate a language which is  in $\mathsf{1SPACE}(\log  g(n))/1  (\odot)$ but not in  $ \mathsf{REG}/1  (\odot)$. For this purpose, we define the language $\mathtt{L}_{g}$ over the alphabet $\{ \#, 0,1\}$ to be the collection of the strings of the form $ s_1\#s_2\# \cdots \#s_m\#^+$,  where 
\begin{itemize}
\item[-] for a member of length $n$ and for $i = 1,2, \ldots, m $;  $s_i$ is a subword of the  form $0^*10^*$, the  length of which is given by $\floor{g(n)}$,  (meaning that $m$ is at most  $\floor{n/(\floor{g(n)}+1)}$), and,
\item[-]  For all $s_i$, $p_i$ denotes the position of the  symbol $1$  within that subword, and  $p_i \in \{ p_{i-1} -1 , p_{i-1}, p_{i-1} +1 \}$ for $i = 2,3, \ldots ,m$.

\end{itemize}

As mentioned earlier, Fact \ref{yamakami_characterization} provides a useful tool for showing that certain languages are not in $ \mathsf{REG}/n(\odot)$. We see that $L_g$  is in $ \mathsf{REG}/n$ if and only if the number of equivalence classes of the equivalence relation $\equiv_{L_g}$ associated with $L_g$ in the way described in Fact 1 is finite. 

Considering pairs of distinct strings $x$ and $y$ of the form $0^*10^*$ such that  $|x| = |y|=\floor{g(n)}$, for values of  $n$ which are big enough that $g(n)\ll n$, one sees that $\equiv_{L_g}$ must have at least $\floor{g(n)}$ equivalence classes. This is because one has to distinguish $\floor{g(n)}$ different subword patterns to decide whether the relationship dictated between subwords $s_1$ and $s_2$ holds or not. Noting that the number of equivalence classes of $\equiv_{L_g}$ is not finite, we conclude that $\mathtt{L}_g$ is not even in $ \mathsf{REG}/n$, let alone in $ \mathsf{REG}/n(\odot)$.

To prove $\mathtt{L}_g \in \mathsf{1SPACE}(\log  g(n))/1  (\odot)$, we describe a one-way Turing machine $T$ that uses one inkdot and $O(\log  g(n))$ space to recognize $\mathtt{L}_g$. The advice for strings of length $n$ is a single inkdot on the $\floor{g(n)}$'th position, \textit{i.e.} where the last symbol of the first subword $s_1$ should appear in a member of the language. During its left-to-right scan of the input, $T$ performs the following tasks in parallel:

\begin{itemize}
\item[-] $T$  checks if its input is of the form $(0^*10^*\#)^+\#^*$, rejecting if this check fails.

\item[-] Using a counter on its work tape, $T$ counts the rightward moves of the input head up to the point where the inkdot is scanned. Having thus ``learned" the value $\floor{g(n)}$, $T$ notes the number of bits required to write this value on the work tape, and marks the tape so as to never use more than a fixed multiple of this number of cells. It compares the lengths of all subsequent subwords with this value, and rejects if it sees any subword of a different length. $T$ also rejects if the cell with the inkdot does not contain the last symbol of the first subword. 

\item[-] $T$ checks the condition $p_i \in \{ p_{i-1} -1 , p_{i-1}, p_{i-1} +1 \}$ for $i = 2,3,\ldots ,m$, by using another pair of counters to record the position of the  symbol 1 in each subword, rejecting if it detects a violation. 
\item[-] $T$ accepts if it finishes scanning the input without a rejection by the threads described above.

\end{itemize}
Clearly, the amount of memory used by $T$ is just a fixed multiple of $\log g(n)$.

\end{proof}


Let us also note that the argument of the proof above can be applied to the general advice track and advice tape models, showing that such small amounts of space are useful for advised automata of those types as well.

\section{Conclusion}
\label{sec:conc}

This paper introduced inkdots as a means of providing advice to finite automata. Nontrivial results on the power of this model and its relationships with other models of advised automata were presented in various settings.

We conclude with a list of open questions.
\begin{itemize}
\item Is there a language that can be recognized by a dfa with $\floor{n/2}$, but not $\floor{n/2}-1$ inkdots of advice?
\item Are there hierarchies like those in Theorems \ref{theorem:inkdot_hierarchy} and \ref{theorem:inkdot_function_hierarchy} for randomized advice as well?
\item We considered only deterministic models for our advised machines. How would things change if we allowed probabilistic or quantum models?
\item Can the inkdot model be extended to other machine models, like pushdown automata? 
\item Are there other, even ``weaker" types of advice? 
\end{itemize}

\acknowledgements
\label{sec:ack}
 We thank Alper \c{S}en, Atilla Y{\i}lmaz, and the anonymous reviewers for their helpful comments.

\bibliographystyle{abbrvnat}
\bibliography{inkdots-bib-dmtcs}
\label{sec:biblio}

\end{document}